\documentclass[leqno]{article}
\usepackage{amsfonts,amsmath,amsthm,amssymb,times}
\usepackage{graphicx}
\usepackage{hyperref}
\usepackage{enumerate}
\usepackage{color}
\usepackage{rotating}
\setcounter{MaxMatrixCols}{30}
\newtheorem{theorem}{THEOREM}[section]
\newtheorem{lemma}{LEMMA} [section]
\newtheorem{definition}{DEFINITION}[section]
\newtheorem{corollary}{COROLLARY}[section]

\newtheorem{proposition}{PROPOSITION}[section]

\hyphenation{mul-ti-va-riate} \hyphenation{con-fi-gu-ra-tion}
\hyphenation{pro-ba-bi-li-ty} \hyphenation{re-pe-ti-tion}
\hyphenation{a-na-ly-sis} \hyphenation{ma-ni-fold}
\hyphenation{e-xam-ple} \hyphenation{pro-ducts}
\hyphenation{a-xi-al} \hyphenation{a-no-ther}
\hyphenation{ne-ce-ssa-ri-ly}\hyphenation{ca-me-ra}
\hyphenation{re-fe-ren-ces} \theoremstyle{definition}
\setlength{\topmargin}{-0.5in}
\setlength{\textheight}{8.5in}
\setlength{\oddsidemargin}{0.0in}
\setlength{\textwidth}{6.5in}
\numberwithin{equation}{section}

\begin{document}

\title{ \bf 3D mean Projective Shape Difference for Face Differentiation from Multiple Digital Camera Images }

\author{ K. D. Yao \thanks{Research
supported by National Science Foundation Grant DMS-1106935 and the
National Security Agency Grant MSP-
H98230-15-1-0135}, V. Patrangenaru \thanks{Research
supported by National Science Foundation Grant DMS-1106935 and the
National Security Agency Grant MSP-
H98230-15-1-0135} and D. Lester \thanks{Research
supported by National Science Foundation Grant DMS-1106935 and the
National Security Agency Grant MSP-
H98230-15-1-0135} \\ Florida
State University}

 \maketitle
\begin{abstract}
We give a nonparametric methodology for hypothesis testing for equality of
extrinsic mean objects on a manifold embedded in a numerical spaces. The
results obtained in the general setting are detailed further in the case of
3D projective shapes represented in a space of symmetric matrices via the
quadratic Veronese-Whitney (VW) embedding. Large sample and nonparametric
bootstrap confidence regions are derived for the common VW-mean of random projective shapes
for finite 3D configurations. As an example, the VW MANOVA testing methodology is
applied to the multi-sample mean problem for independent projective shapes of $3D$ facial configurations retrieved
from digital images, via Agisoft PhotoScan technology.
\end{abstract}

\section{Introduction}

\noindent In this paper, we continue the Object Data Analysis program started by  Patrangenaru and Ellingson (2015)\cite{PaEl:2015}. In Section \ref{sc2} we revisit the hypothesis testing for equality of mean vectors from $g$ multivariate populations, in  nonparametric setting based on the idea that the numbers in a finite set are all equal, if the squares of their differences add up to zero (see Bhattacharya and Bhattacharya(2012)\cite{BhBh:2012}. The main difference between our approach and classical MANOVA, is that we do not assume that all populations have a common covariance matrix $\Sigma,$ and we do not make any distributional assumption. In Section \ref{sc3}, we extend this methodology to test for the equality of multiple extrinsic means, based on random samples of various sizes collected from $g$ independent probability measures on a manifold. Our newly developed extrinsic MANOVA test is applied to the particular case of $3$D projective shape data in section \ref{sc4}, using the Veronese Whitney embedding of the projective shape space (see eg. Mardia and Patrangenaru(2005)\cite{MaPa:2005}. This method builds upon previous results on one sample hypothesis testing methods, as developed in Patrangenaru et al. (2010\cite{PaLiSu:2010}, 2014\cite{PaQiBu:2014}). The space $P\Sigma_{3}^{k}$  of 3D projective shapes of $k$-ads including a projective frame at given landmark indices is isomorphic to $(\mathbb{R}P^3)^{k-5}$. Therefore a 3D projective shape face differentiation via VW-MANOVA testing is presented in Section \ref{sc5}. Note that behind the 3D Agisoft reconstruction software are results by Faugeras(1992)\cite{Fa:1992} and Hartley et. al.(1992)\cite{HaGuCh:1992}, showing that a 3D configuration of landmarks can be obtained from multiple noncalibrated camera images up to a projective transformation in 3D, thus allowing us to conduct without ambiguity a 3D projective shape analysis.

\section{Motivations for new MANOVA on manifolds}\label{sc2}
\noindent For $a=1,...,g,$ suppose $X_{a,i}\sim N_p(\mu_a,\Sigma), i=1,...,n_a$ are $p$ dimensional i.i.d random vectors. To test if the mean vectors of the $g$ groups are the same, one considers the hypothesis testing problem
\begin{align}
H_0 &:\ \mu_1=\mu_2=...=\mu_g=\mu  \label{MANOVA_null}\\
H_a &:\ at\ least\ one\ equation\ does\ not\ hold. \notag
\end{align}
Assuming that the covariance matrix $\Sigma$ is invertible, by the Central Limit Theorem, for large sample sizes $n_a, a = 1, \dots, g,$ we have
\begin{align}
\sqrt{n_a}\Sigma^{-\frac{1}{2}}(\bar{X}_a-\mu) & \sim N_p(0_p,I_p),\\
n_a (\bar{X}_a-\mu)^T \Sigma^{-1}(\bar{X}_a-\mu)& \sim \chi_p^2.
\end{align}
However, $\Sigma$ is always unknown, so in practice, one has to use its unbiased estimator $S_a,\ a=1,...,g.$
\begin{equation}
n_a (\bar{X}_a-\mu)^T S_a^{-1}(\bar{X}_a-\mu)\sim \chi_p^2.
\end{equation}
Let us consider the pooled sample mean $\bar{X}=\frac{1}{n}(n_1\bar{X}_1+...+n_g\bar{X}_g),\ n=\sum_{a=1}^{g}n_a.$
\begin{lemma}\label{L1}
Under the null, $\bar{X}$ is a consistent estimator of $\mu,$ provided $\frac{n_a}{n}\rightarrow \lambda_a>0,\ as\ n\rightarrow \infty, \ a=1,...,g$.
\end{lemma}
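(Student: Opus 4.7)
The plan is to reduce the claim to the ordinary law of large numbers applied to each of the $g$ samples, together with a weighted-average continuity argument. Since the $X_{a,i}$ are i.i.d.\ $N_p(\mu_a,\Sigma)$ with $\mu_a=\mu$ under $H_0$, each group mean $\bar X_a$ is a consistent estimator of $\mu$ as $n_a\to\infty$.

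First I would observe that the assumption $n_a/n\to\lambda_a>0$ for every $a$ forces $n_a\to\infty$ whenever $n\to\infty$: if $n_a$ stayed bounded along a subsequence, then $n_a/n$ would tend to $0$, contradicting $\lambda_a>0$. This legitimizes applying the (weak or strong) law of large numbers to each group separately, yielding $\bar X_a \xrightarrow{P} \mu$ for $a=1,\dots,g$.

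Next I would rewrite the pooled mean as a convex combination with data-dependent weights,
\begin{equation}
\bar X \;=\; \sum_{a=1}^{g}\frac{n_a}{n}\,\bar X_a,
\end{equation}
and decompose
\begin{equation}
\bar X-\mu \;=\; \sum_{a=1}^{g}\frac{n_a}{n}(\bar X_a-\mu).
\end{equation}
Since $n_a/n\to\lambda_a$ (deterministic) and $\bar X_a-\mu\xrightarrow{P}0_p$, Slutsky's theorem gives $\frac{n_a}{n}(\bar X_a-\mu)\xrightarrow{P}0_p$ term by term; summing over the finitely many groups $a=1,\dots,g$ preserves convergence in probability and completes the proof.

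There is no real obstacle here: the only point requiring a line of care is that the weights $n_a/n$ are deterministic so that no joint convergence issue arises, and that the positivity $\lambda_a>0$ is what guarantees each $n_a\to\infty$ so that the within-group LLN can be invoked. Normality of the $X_{a,i}$ is not actually used beyond existence of a mean, so the conclusion in fact extends to the general integrable setting that will be needed in later sections.
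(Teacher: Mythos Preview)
Your proof is correct and follows essentially the same approach as the paper: write $\bar X$ as the convex combination $\sum_a (n_a/n)\bar X_a$, use $\bar X_a\xrightarrow{P}\mu$ under $H_0$, and pass to the limit in the weights. You simply make explicit the points the paper leaves implicit (that $\lambda_a>0$ forces $n_a\to\infty$ and that Slutsky justifies the termwise limit).
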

\begin{proof}
Indeed, for any $a\in \{1,2,...,g\}$, since $\frac{n_a}{n}\rightarrow \lambda_a>0,\ as\ n\rightarrow \infty$, and $\bar{X}_a$ is the consistent estimator of $\mu$, therefore,
\begin{equation}
\bar{X}\rightarrow_p \lambda_1 \mu+\lambda_2 \mu+...+\lambda_g \mu=\mu.
\end{equation}
\end{proof}

\begin{theorem}
The statistic for the hypothesis in \eqref{MANOVA_null} is
\begin{equation}
\sum_{a=1}^g n_a (\bar{X}_a-\bar{X})^T S_a^{-1}(\bar{X}_a-\bar{X})\sim \chi_{gp}^2.
\end{equation}
So the rejection region for the test is
\begin{equation}\label{eq:manova}
\sum_{a=1}^g n_a (\bar{X}_a-\bar{X})^T S_a^{-1}(\bar{X}_a-\bar{X})>\chi_{gp}^2(c).
\end{equation}
\end{theorem}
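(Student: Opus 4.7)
The plan is to establish the claimed $\chi^2_{gp}$ limit in four stages, building directly on the displayed pointwise convergences that precede the theorem. First, I would invoke the multivariate CLT: under $H_0$, for each $a$ the recentered mean $\sqrt{n_a}\Sigma^{-1/2}(\bar{X}_a-\mu)$ converges in distribution to $N_p(0_p, I_p)$, so the quadratic form $n_a(\bar{X}_a-\mu)^T \Sigma^{-1}(\bar{X}_a-\mu)$ converges to $\chi^2_p$ by the continuous mapping theorem applied to $\|\cdot\|^2$.

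Second, I would use the fact that the unbiased sample covariance $S_a$ is a consistent estimator of $\Sigma$ (hence $S_a^{-1}\to_p \Sigma^{-1}$ by continuity of matrix inversion on the open set of invertible matrices), and apply Slutsky's theorem to replace $\Sigma^{-1}$ by $S_a^{-1}$ inside the quadratic form without changing the asymptotic distribution. Third, since the $g$ random samples are drawn from independent populations, the $g$ quadratic forms are mutually independent, so their sum converges in distribution to the sum of $g$ independent $\chi^2_p$ random variables, which is $\chi^2_{gp}$.

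Fourth, I would pass from $\mu$ to the pooled estimator $\bar{X}$ by writing $\bar{X}_a - \bar{X} = (\bar{X}_a-\mu) - (\bar{X}-\mu)$ and invoking Lemma \ref{L1}, which guarantees $\bar{X}\to_p \mu$ under the null whenever $n_a/n\to\lambda_a>0$. One then argues via Slutsky that substituting the consistent estimator $\bar{X}$ for the unknown $\mu$ preserves the limiting $\chi^2_{gp}$ distribution of the test statistic, and the rejection region \eqref{eq:manova} follows at once from the definition of the upper critical value $\chi^2_{gp}(c)$.

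The delicate step is the last one: because $\bar{X}$ is a weighted average of the $\bar{X}_a$'s, the deviation $\sqrt{n_a}(\bar{X}-\mu)$ is only $O_p(\sqrt{n_a/n})=O_p(1)$ rather than $o_p(1)$, so strictly speaking the substitution induces correlations among the $g$ summands and a careful accounting of these cross-terms is required to verify that they do not perturb the stated limit. This is the main obstacle in the argument, and it is precisely where the independence across groups, the weights $\lambda_a$, and the consistency from Lemma \ref{L1} have to be combined carefully to justify the $\chi^2_{gp}$ calibration used in \eqref{eq:manova}.
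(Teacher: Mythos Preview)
Your four-stage plan mirrors exactly the argument the paper lays out: the displayed convergences preceding the theorem (CLT, passage from $\Sigma$ to $S_a$) together with Lemma~\ref{L1} constitute the paper's entire justification, and the theorem is stated immediately afterward as their consequence, with no separate proof block. Your flagging of the $O_p(1)$ issue in the final substitution is a scruple the paper does not raise; it simply treats the replacement of $\mu$ by the consistent estimator $\bar X$ as a direct Slutsky step and asserts the $\chi^2_{gp}$ limit without accounting for the dependence among summands that you correctly identify.
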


\section{ MANOVA on manifolds }\label{sc3}
\noindent In this section we will focus on the asymptotic behavior of statistics related to means on a  manifold $\mathcal{M}$ based on samples of different sizes from different populations on $\mathcal{M}.$
Now let's consider the set  $X_{a,1}, \ldots,X_{a,n_a}$ ($a=1,2,...,g$) of iid random objects on $\mathcal{M}$ with common probability measure $Q_{a}.$ We denote the extrinsic mean of the $j$- nonfocal probability measure $Q_a$ on $\mathcal{M}$ by $\mu_{a,E}$ for ease of notation and because there is no ambiguity about the embedding used.  The corresponding extrinsic sample means are written $\bar{X}_{a,E}$ for $a=1, \cdots, g.$ From this point on, we will assume that all the distributions are $j$-nonfocal.

\subsection{Hypothesis testing and $T^2$ statistic}\label{ssc3.1}

Assume $X_{a,1}, \ldots,X_{a,n_a}$ are iid random objects on $\mathcal{M}$  a $p$-dimensional  manifold, with probability measure $Q_{a}$ with $a=1,2,...,g$. We are interested in comparing multiple extrinsic means.

\noindent We would like to develop a test similar to \eqref{eq:manova} designed to test the difference between the $g$ extrinsic means. One challenge that presents itself at the early stage is a proper definition of  a pooled mean for random objects on a $p$-dimensional manifold $\mathcal M.$ Linearity becomes an issue when dealing with extrinsic means. For a proper definition we will focus on the equalities tied to the assumption
$$A_0: \mu_{1,E}= \cdots = \mu_{g, E}$$

\begin{definition}\label{d31}
Under the assumption $A_0$ and  for any $a\in \{1,2,...,g\}$, with $\frac{n_a}{n}\rightarrow \lambda_a>0,\ as\ n\rightarrow \infty.$ We define
\begin{enumerate}[(i)]
\item The {\bf pooled extrinsic mean with weights $\lambda = (\lambda_1, \dots, \lambda_g)$,} denoted $\mu_{E}(\lambda)$ as the value in $\mathcal M$ given by
\begin{equation}
j(\mu_{E})= P_j(\lambda_1 j(\mu_{1,E}) + \cdots + \lambda_{g}j(\mu_{g,E})) \label{E_Polled_mean}
\end{equation}
Where $\mu_{a,E}$ is the extrinsic mean of the random object $X_{a,1}$ and $\Sigma_{a=1}^{g} \lambda_a= 1$
\item The {\bf extrinsic pooled sample mean } denoted $\bar{X}_{E} ~\in~\mathcal M$ given by;
{
\begin{equation}
j(\bar{X}_{E})= P_j \left(\frac{n_1}{n} j(\bar{X}_{1, E}) + \cdots + \frac{n_g}{n}j(\bar{X}_{g, E}) \right) \label{E_Polled_sample_mean}
\end{equation}
Where $\bar{X}_{a, E}$ is the extrinsic sample mean for $X_{a,1}$ and $n=\sum_{a=1}^{g} n_a $ }
\end{enumerate}
\end{definition}
\smallskip

\noindent Note that since $A_0$ implies $j(\mu_{1,E})= \cdots = j(\mu_{g, E}),$ and with our definition of the extrinsic pooled mean we get $j(\mu_{E})= j(\mu_{a,E})$ for each $a=1,\dots, g.$
 Furthermore, the linear combination $ \lambda_1 j(\mu_{1,E}) + \cdots + \lambda_{g}j(\mu_{g,E}) \in j(\mathcal M).$  Note that for $a=1, \cdots , g$ $\bar{X}_{a, E}$  is a consistent estimator of $ \mu_{a,E}$ and therefore we get that $j(\bar{X}_{E}) \to_{p} j(\mu_{E}).$ Since $j$ is a homeomorphism from $\mathcal M$ to $j(\mathcal M)$ we also have that $\bar{X}_{E}$ is a consistent estimator of $\mu_{E}$ the extrinsic pooled mean.
 With this definition at hand, we now express the following hypothesis test, designed to test the difference between extrinsic means and is given by;

\begin{align}
\label{MANOVA_null_ex}
H_0 &:\ \mu_{1,E}=\mu_{2,E}=...=\mu_{g,E}= \mu_E,\\
H_a &:\ at\ least\ one\ equality\ \mu_{a,E}=\mu_{b,E}, 1 \leq a < b\leq g \ does\ not\ hold. \notag
\end{align}

And  since the embedding $j: \mathcal M \to \mathbb R^N$ is one-to-one the hypothesis above can be interchangeably written
\begin{align}
\label{MANOVA_jnull}
H_{0}^{j} &:\  j(\mu_{1,E})=j(\mu_{2,E})=...=j(\mu_{g,E})= j(\mu_E),\\
H_{a}^{j} &:\ at\ least\ one\ equality\ \mu_{a,E}=\mu_{b,E}, 1 \leq a < b\leq g \ does\ not\ hold. \notag
\end{align}

\noindent In order to test hypothesis \eqref{MANOVA_null_ex} we will use a $T^2$ like statistic. The theorem below, gives us the asymptotic behavior needed to establish such a statistic. For $a=1,\dots,g,$ we get, from Bhattacharya and Patrangenaru \cite{BhPa:2005}, the following:
\begin{enumerate}[(i)]
\item $\displaystyle{S_{n_a}= (n_a )^{-1} \Sigma_{i=1}^{n_a}(j(X_{a,i})- j(\bar{X}_{E})) (j(X_{a,i})- j(\bar{X}_{E}))^T   }$ is a consistent estimator of $\Sigma_{a},$ the covariance matrix of $X_{a,1}$ and
\item $\tan_{j(\bar{X}_{E}) } \nu$ is a consistent estimator of $\tan_{P_{j}(\mu)} \nu,$ where $\nu \in \mathbb{R}^{N}.$
\end{enumerate}
 It follows that, under \eqref{MANOVA_jnull}, $S_{E,a}(j,X_a),$ given by
 {
\begin{align}
S_{E,a}(j,X_a)&=\left[ \left[\sum_{a=1}^{m} d_{\overline{j^{(p)}(X)}}P_{j}(e_b) \cdot e_{i}(j(\bar{X}_{E}))  ~e_{i}(j(\bar{X}_{E}))\right]_{i=1,...,p} \right] \cdot ~S_{n_a}\notag\\
&~~~~\left[ \left[\sum_{a=1}^{m} d_{\overline{j^{(p)}(X)}}P_{j}(e_b) \cdot e_{i}(j(\bar{X}_{E})) e_{i}(j(\bar{X}_{E})) \right]_{i=1,...,p} \right]^T \notag
\end{align}
where for $\overline{j^{(p)}(X)}=\frac{n_1}{n} j(\bar{X}_{1, E}) + \cdots + \frac{n_g}{n}j(\bar{X}_{g, E})$ and $P_j(\overline{j^{(p)}(X)})$ is a  consistent estimator of  $j(\mu_E).$ One must note that the extrinsic sample covariance matrix $S_{E,a}(j,X_a)$ is expressed in terms of $d_{\overline{j^{(p)}(X)}}P_{j}(e_b) \in T_{j(\bar{X}_{E})} j(\mathcal M)$ and not in term of $d_{\overline{j(X{a,1})}}P_{j}(e_b) \in T_{j(\bar{X}_{a,E})} j(\mathcal M).$ 

\bigskip

\begin{theorem}\label{Theorem31}
Assume $j: \mathcal{M} \rightarrow \mathbb{R}^N$ is a closed embedding of $\mathcal{M}$. Let $\{ X_{a,i} \}_{i=1}^{n_a}$ for $a=1,...,g$ be random samples from the $j$-nonfocal distributions $\mathcal{Q}_a$. Let $\mu_a = E(j(X_{a,1}))$ and assume $j(X_{a,1})$'s have finite second-order moments and the extrinsic covariance matrices $\Sigma_{a,E}$ of $X_{a,1}$ are nonsingular. We also let $\left( e_{1}(p), ...., e_{N}(p) \right)$, for $p~\in \mathcal{M}$ be an orthonormal frame field adapted to $j$.\\
For $a =1,\dots, g,$ assume $\lambda_{a} >0$ are constants, such that $\sum_{a=1}^{g} \lambda_a =1$. Furthermore, let $\frac{n_a}{n} \rightarrow \lambda_{a} >0$, as $n\rightarrow \infty$, with $n=\Sigma_{a=1}^{g} n_a.$ Then  we have the following asymptotic behavior;
\begin{equation}
\sum_{a=1}^g n_a~ \tan_{j(\mu_E)}(j(\bar{X}_{a,E})-j(\mu_E))^T \Sigma_{a,E}^{-1}~\tan_{j(\mu_E)} (j(\bar{X}_{a,E})-j(\mu_E))  \to_d   \chi_{gp}^2. \notag
\end{equation}
\noindent It follows that the statistics for hypothesis \eqref{MANOVA_null_ex} have the following asymptotic results;
\begin{enumerate}[(a)]
\item the statistic \begin{align}
\sum_{a=1}^g n_a~ \tan_{j(\mu_E) } (j(\bar{X}_{a,E})-j(\bar{X}_E))^T S_{E,a}(j,X_a)^{-1}~\tan_{j(\mu_E) } (j(\bar{X}_{a,E})-j(\bar{X}_E))   \to_d    \chi_{gp}^2.\notag
\end{align}
\item the statistic\begin{align}
\sum_{a=1}^g n_a~ \tan_{j(\bar{X}_{E})} (j(\bar{X}_{a,E})-j(\bar{X}_E))^T S_{E,a}(j,X_a)^{-1}~\tan_{j(\bar{X}_E)} (j(\bar{X}_{a,E})-j(\bar{X}_E))   \to_d    \chi_{gp}^2. \notag
\end{align}

\end{enumerate}

\end{theorem}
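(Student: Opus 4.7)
The plan is to prove an ``oracle'' version of the limit (with the true $\mu_E$ and $\Sigma_{a,E}$) first, then pass to statements (a) and (b) by consistency arguments and Slutsky's theorem.

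First, for each fixed $a$ the multivariate CLT in the ambient space gives $\sqrt{n_a}(\overline{j(X_a)} - \mu_a) \to_d N_N(0,\Sigma_a)$, where $\overline{j(X_a)} = n_a^{-1}\sum_i j(X_{a,i})$ is the ambient-space sample mean. Because $Q_a$ is $j$-nonfocal, $P_j$ is $C^1$ at $\mu_a$; applying the delta method to $j(\bar X_{a,E}) = P_j(\overline{j(X_a)})$ and then composing with the orthogonal projection onto $T_{j(\mu_{a,E})}j(\mathcal M)$ in the adapted frame $(e_1,\dots,e_N)$ produces $\sqrt{n_a}\,\tan_{j(\mu_{a,E})}(j(\bar X_{a,E}) - j(\mu_{a,E})) \to_d N_p(0,\Sigma_{a,E})$. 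Under $H_0^j$, $\mu_{a,E}=\mu_E$ for every $a$; since the $g$ samples are independent, the corresponding standardised quadratic forms $n_a\,\tan(\cdot)^T\Sigma_{a,E}^{-1}\tan(\cdot)$ are asymptotically independent $\chi^2_p$, so their sum converges to $\chi^2_{gp}$. This settles the oracle form of the theorem.

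To pass to (a) and (b), I would invoke Slutsky's theorem. The Bhattacharya--Patrangenaru consistency results cited in the paper give $S_{E,a}(j,X_a)\to_p \Sigma_{a,E}$, and under the null $j(\bar X_E)\to_p j(\mu_E)$. Continuity of the adapted frame field and of $dP_j$ then allows substitution of $j(\bar X_E)$ for $j(\mu_E)$ in the tangent-projection base-point (needed for (b)), and of $S_{E,a}$ for $\Sigma_{a,E}$ in the quadratic form, preserving the $\chi^2_{gp}$ limit.

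The main obstacle is the replacement of $\mu_E$ by $\bar X_E$ \emph{inside the differences} $j(\bar X_{a,E}) - j(\bar X_E)$, which is what separates the oracle version from both (a) and (b). Because $\sqrt{n_a}(j(\bar X_E) - j(\mu_E)) = O_p(1)$ rather than $o_p(1)$, this swap perturbs the centred vector by a nondegenerate random quantity and cannot be resolved by Slutsky alone. I would tackle it by developing a joint linear expansion of the full vector $(\sqrt{n_a}\,\tan_{j(\mu_E)}(j(\bar X_{a,E})-j(\mu_E)))_{a=1}^g$ together with the linearisation $j(\bar X_E)-j(\mu_E)\approx dP_j\bigl(\sum_b (n_b/n)(j(\bar X_{b,E})-j(\mu_E))\bigr)$, and then examining the resulting Gaussian quadratic form. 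Showing that it collapses exactly to $\chi^2_{gp}$, rather than to a reduced-rank or more general quadratic form in correlated Gaussians, is the delicate step; if this fails, an additional argument or a modification of the degrees of freedom would be required, and identifying the precise source of any cancellation is the crux of the proof.
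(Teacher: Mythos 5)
Your oracle step is exactly the paper's argument: the Bhattacharya--Patrangenaru CLT gives $\sqrt{n_a}\,\tan_{j(\mu_E)}(j(\bar X_{a,E})-j(\mu_E)) \to_d N_p(0,\Sigma_{a,E})$ for each $a$, and independence across samples sums $g$ independent $\chi^2_p$ limits to $\chi^2_{gp}$. Where you part ways with the paper is at the passage to (a) and (b), and here your caution is vindicated: the paper's proof does precisely what you say cannot be done, substituting $\bar X_E$ for $\mu_E$ inside the centred differences by a bare appeal to consistency of $S_{E,a}(j,X_a)$ and of $\tan_{j(\bar X_E)}$, with no account of the fact that $\sqrt{n_a}\,\tan_{j(\mu_E)}(j(\bar X_E)-j(\mu_E))$ is $O_p(1)$, not $o_p(1)$. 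Consistency and Slutsky do legitimately handle the covariance estimate and the base point of the tangent projection, but not the centering.

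Carrying your proposed joint expansion to completion settles the delicate step, and it settles it negatively. Let $V_a = \sqrt{n_a}\,\tan_{j(\mu_E)}(j(\bar X_{a,E})-j(\mu_E))$, jointly asymptotically independent $N_p(0,\Sigma_{a,E})$, and use the fact that $d_{j(\mu_E)}P_j$ restricts to the identity on $T_{j(\mu_E)}j(\mathcal M)$. The definition of the pooled sample mean then gives, jointly in $a$, $\sqrt{n_a}\,\tan_{j(\mu_E)}(j(\bar X_{a,E})-j(\bar X_E)) \to_d V_a - \sqrt{\lambda_a}\sum_{b=1}^{g}\sqrt{\lambda_b}\,V_b$. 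Writing $W_a = \Sigma_{a,E}^{-1/2}V_a$, i.i.d.\ standard normal, the statistic in (a) or (b) converges to $\|(I_{gp}-A)W\|^2$, where $A$ has $p\times p$ blocks $A_{ab}=\sqrt{\lambda_a\lambda_b}\,\Sigma_{a,E}^{-1/2}\Sigma_{b,E}^{1/2}$. Since $\sum_a \lambda_a = 1$, one checks $A^2 = A$ with $\operatorname{rank}(A)=p$, so $I_{gp}-A$ is an (in general oblique) idempotent of rank $(g-1)p$. When all $\Sigma_{a,E}$ coincide, $A$ is an orthogonal projection and the limit is exactly $\chi^2_{(g-1)p}$ --- not $\chi^2_{gp}$; with unequal extrinsic covariances the limit $W^T(I-A)^T(I-A)W$ is in general not chi-squared at all, but a weighted mixture of chi-squares (a clean $\chi^2_{(g-1)p}$ would require pooling with GLS-type weights proportional to $n_a\Sigma_{a,E}^{-1}$ rather than $n_a/n$). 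So the ``collapse to $\chi^2_{gp}$'' you flagged as the crux does fail, and the modification of degrees of freedom you anticipated is genuinely required. Your proposal is therefore incomplete as a proof, but for an unfixable reason: parts (a) and (b) as stated do not follow from the oracle limit, and the gap you located is a gap in the paper's own proof --- the same Euclidean phenomenon is already present in the paper's Theorem 2.1, where centering at the pooled mean $\bar X$ is likewise claimed not to affect the $\chi^2_{gp}$ limit.
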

\begin{proof}

Recall from the Bhattacharya and Patrangenaru(2005)\cite{BhPa:2005}, from the consistency of the sample mean vector and from the continuity of the
projection map $P_j,$ that we have
\begin{align}
&\sqrt{n_a}~\tan_{j(\mu_{E})}(j(\bar{X}_{a,E})-j(\mu_{E})) \rightarrow_d N(0_p,\Sigma_{a,E}), ~~~for~ a=1,2,...,g \notag\\
\text{where} & \notag\\
\Sigma_{a,E}&=\left[ \left[  \sum_{b=1}^N d_{\mu}P_{j}(e_b) \cdot e_{k}(P_{j}(\mu)) \right]_{k=1,...,p}\right]~~\Sigma_{a}~~\left[\left[ sum_{b=1}^N d_{\mu}P_{j}(e_b) \cdot e_{k}(P_{j}(\mu))  \right]^T_{k=1,...,p} \right], \notag
\end{align}
where $\mu= \lambda_1 j(\mu_{1,E}) + \cdots + \lambda_{g}j(\mu_{g,E})$ and $\Sigma_{a}$ is the covariance matrices of the $j(X_{a,1})$ with respect to the canonical basis $e_1,...,e_{N}.$ And under the null, from \eqref{MANOVA_null_ex}, the matrices $\Sigma_{a,E}$ are defined with respect to the basis $f_1(\mu_{E}),..., f_{p}(\mu_{E})$ of local frame fields, $f_r=d_{j^{-1}(P_j(\mu))}(e_r(P_j(\mu))).$  We then have for each $a=1,...,g$
$$ n_a \tan_{j(\mu_E)}(j(\bar{X}_{a,E})-j(\mu_E))^T\Sigma_{a,E}^{-1} \tan_{j(\mu_E)}(j(\bar{X}_{a,E})-j(\mu_E))  \to_d   \chi_{p}^2.$$ and since the random samples are independent we have,
\begin{equation}
\sum_{a=1}^g n_a \tan_{j(\mu_E)}(j(\bar{X}_{a,E})-j(\mu_E))^T\Sigma_{a,E}^{-1}~\tan_{j(\mu_E)}(j(\bar{X}_{a,E})-j(\mu_E))  \to_d   \chi_{gp}^2.
\end{equation}
$\bar{X}_{E}$ is the consistent estimator of $\mu_E$, then the pooled sample mean
\begin{align}
j(\bar{X}_{E})= P_{j} \left( \frac{1}{n} \sum_{a=1}^{g} n_a j(\bar{X}_{a,E})\right)  \rightarrow_{p} j(\mu_{E}) ~~~(\text{by Lemma}~ \ref{L1})
 \end{align}
 {
And since $S_{E,a}(j,X_a)$ consistently estimates $\Sigma_{a}$ and $\tan_{j(\bar{X}_{E})}$ is a consistent estimator of $\tan_{j(\mu_E)}$, we have the following
\begin{align}
 \sum_{a=1}^g n_a \tan_{j(\mu_E)}(j(\bar{X}_{a,E})-j(\bar{X}_E))^T S_{E,a}(j,X_a)^{-1}~\tan_{j(\mu_E)}(j(\bar{X}_{a,E})-j(\bar{X}_E)) \to_d   \chi_{gp}^2. \notag\\
\sum_{a=1}^g n_a~ \tan_{j(\bar{X}_{E})} (j(\bar{X}_{a,E})-j(\bar{X}_E))^T S_{E,a}(j,X_a)^{-1}~\tan_{j(\bar{X}_{E}) } (j(\bar{X}_{a,E})-j(\bar{X}_E))   \to_d    \chi_{gp}^2. \notag
\end{align}
}
\end{proof}
\subsection{Nonparametric bootstrap confidence regions for the common extrinsic mean}\label{ssc3.2}
\noindent From Bhattacharya and Patrangenaru(2005)\cite{BhPa:2005} and from Corollary 3.2 in
Bhattacharya and Bhattacharya(2012)\cite{BhBh:2012}, under the hypothesis \\
$~~~\quad \quad \quad ~~~\begin{cases}
H_0 &:\ \mu_{1,E}=\mu_{2,E}=...=\mu_{g,E}= \mu_E, \notag\\
H_a &:\ni (i,j) 1\le i<j <g, \text{s.t.} \ \mu_{i,E} \ne \mu_{j,E} , \notag
\end{cases}$\\
we have:\\
\begin{corollary}\label{c}
Under the assumptions of Theorem \eqref{Theorem31} , a confidence  regions for $\mu_E$ of asymptotic level $1- c$ is given by $C_{n, c}^{(g)}$ and $D_{n, c}^{(g)}$ which are defined below
\begin{enumerate}[(a)]
\item $C_{n, c}^{(g)}= j^{-1}(U_{n , c})$ where \\$U_{n, c}= \{  j (\nu) \in j(\mathcal{M}): \sum_{a=1}^g n_a  \left\| S_{E,a}(j,X_{a})^{-1/2}~\tan_{ j (\nu)}(j(\overline{X}_{a,E})- j (\nu) )\right\|^2 \leq \chi^{2}_{gp, 1-c} \}$
\item $D_{n, c}^{(g)}= j^{-1}(V_{n , c})$ where \\$V_{n, c}= \{  j(\nu) \in j(\mathcal{M}):  \sum_{a=1}^g n_a \left\| S_{E,a}(j,X_{a})^{-1/2}~\tan_{j(\bar{X}_{E})}(j(\overline{X}_{a,E})-j(\nu) )\right\|^2 \leq \chi^{2}_{gp, 1-c} \}$
\end{enumerate}
where $\bar{X}_{E}$ is the extrinsic pooled sample mean defined in Definition \ref{d31} $(ii)$
\end{corollary}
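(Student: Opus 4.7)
The plan is to invert the pivotal statistics from Theorem \ref{Theorem31} in the standard way: for each candidate $\nu \in \mathcal M$, collect it in the confidence region exactly when the statistic evaluated at $j(\nu)$ does not exceed $\chi^2_{gp,1-c}$. Under $H_0$, setting $\nu = \mu_E$ reduces each of these statistics to the pivot already shown to converge in distribution to $\chi^2_{gp}$, so the true common mean is covered with asymptotic probability $1-c$.

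For part (a), I would take the statistic in Theorem \ref{Theorem31}(a) and view
$$\sum_{a=1}^g n_a\,\tan_{j(\nu)}(j(\bar X_{a,E})-j(\nu))^T S_{E,a}(j,X_a)^{-1}\tan_{j(\nu)}(j(\bar X_{a,E})-j(\nu))$$
as a function of $\nu \in \mathcal M$. By construction this is exactly the inequality defining $U_{n,c}$. Evaluating at $\nu = \mu_E$ recovers the statistic of Theorem \ref{Theorem31}(a) verbatim, since the extrinsic sample covariances $S_{E,a}(j,X_a)$ are built from $\bar X_E$ and only the tangent-projection base point is affected by the substitution. Consequently
$$P\!\left(\mu_E \in C_{n,c}^{(g)}\right) = P\!\left(j(\mu_E) \in U_{n,c}\right) \longrightarrow P(\chi^2_{gp} \leq \chi^2_{gp,1-c}) = 1-c.$$
Because $j$ is a closed embedding, $j^{-1}$ is a well-defined homeomorphism on $j(\mathcal M)$, so the pullback $C_{n,c}^{(g)}=j^{-1}(U_{n,c})$ is a genuine (random) subset of $\mathcal M$.

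For part (b), the argument is identical except that the base point of the tangent projection is $j(\bar X_E)$ rather than $j(\nu)$, while $j(\nu)$ still appears inside the difference $j(\bar X_{a,E})-j(\nu)$. Evaluation at $\nu = \mu_E$ yields exactly the statistic in Theorem \ref{Theorem31}(b), whose convergence to $\chi^2_{gp}$ was established using consistency of $\bar X_E$ for $\mu_E$ (which follows from Lemma \ref{L1} applied to $\overline{j^{(p)}(X)}$ and continuity of $P_j$) together with consistency of $\tan_{j(\bar X_E)}$ for $\tan_{j(\mu_E)}$. The same covering-probability computation then gives asymptotic level $1-c$ for $D_{n,c}^{(g)}$.

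Essentially no step is a genuine obstacle: the asymptotic work has already been done in Theorem \ref{Theorem31}, and what remains is the pivot/confidence-region duality together with the observation that substituting $\nu$ for $\mu_E$ in the base point of the tangent projection leaves the chi-square limit unchanged at $\nu=\mu_E$. The only item worth flagging is that the sets $U_{n,c}, V_{n,c}$ need not be geodesically convex in $j(\mathcal M)$, but this does not affect their role as confidence regions; the closed-embedding assumption on $j$ guarantees that $j^{-1}$ is well-defined on these sets.
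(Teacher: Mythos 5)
Your proposal is correct and is essentially the paper's own justification: the paper states this corollary without a written proof, treating it as the immediate test-inversion consequence of Theorem \ref{Theorem31} (together with the citations to Bhattacharya--Patrangenaru (2005) and Corollary 3.2 of Bhattacharya--Bhattacharya (2012)), which is exactly the pivot/confidence-region duality you spell out. The one slip is the word ``verbatim'': at $\nu=\mu_E$ the region statistics have $j(\mu_E)$ inside the difference $j(\overline{X}_{a,E})-j(\nu)$, whereas the lettered statistics of Theorem \ref{Theorem31} are centered at $j(\bar{X}_E)$ there, so strictly you should invoke the theorem's first (unlettered) display, $\sum_{a=1}^g n_a\,\tan_{j(\mu_E)}(j(\bar{X}_{a,E})-j(\mu_E))^T\Sigma_{a,E}^{-1}\tan_{j(\mu_E)}(j(\bar{X}_{a,E})-j(\mu_E))\to_d\chi^2_{gp}$, combined with consistency of $S_{E,a}(j,X_a)$ for $\Sigma_{a,E}$ and of $\tan_{j(\bar{X}_E)}$ for $\tan_{j(\mu_E)}$ plus Slutsky --- all already established in the theorem's proof, so the coverage conclusion is unaffected.
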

\noindent Most of the data we will be focusing on will have value of $n$ relatively small. We will need to use resampling, in particular, bootstrap methods.  For $a=1,...,g,$ let $~\{ X_{a,i} \}_{i=1}^{n_a}$ be i.i.d.r.o's from the $j$-nonfocal distributions $\mathcal{Q}_a.$ Let $\{X_{a,r}^{*} \}_{r=1,...,n_a}$ be random resamples with repetition  from the empirical $\hat{Q}_{n_a}$ conditionally given $\{ X_{a,i} \}_{i=1}^{n_a}.$
\noindent The confidence regions $C_{n, c}^{(g)}$ and $D_{n, c}^{(g)}$ described above  have the corresponding bootstrap analogue ${C^*}_{n, c}^{(g)}$ and ${D^*}_{n, c}^{(g)}$ which are defined in the corollary below.

\begin{corollary}\label{c2}
The $(1-c) 100 \%$ bootstrap confidence regions for $\mu_{E}$ with $d=g p$ are given by
\begin{enumerate}[(a)]
\item ${C^*}_{n, c}^{(g)}= j^{-1}(U^{*}_{n , c})$ and
\begin{equation}\label{cstar_g}
U^{*}_{n , c}= \{  j(\nu) \in j(\mathcal{M}):   \sum_{a=1}^g n_a \left\| S_{E,a}(j,X_{a})^{-1/2}~\tan_{j(\nu)}(j(\overline{X}_{a,E})-j(\nu))\right\|^2 \leq {c^*}^{(g)}_{1-c} \}
\end{equation}
\noindent where ${c^*}^{(g)}_{1-c}$ is the upper $100(1-c) \%$ point of the values
\begin{equation}
 \sum_{a=1}^g n_a \left\|  S_{E,a}(j,X^{*}_{a})^{-1/2}~\tan_{j(\bar{X}_{E})}(j(\overline{X^*}_{a,E})-j(\bar{X}_{E}))\right\|^2
\end{equation} among the bootstrap re samples.
\item ${D^*}_{n, c}^{(g)}= j^{-1}({V^*}_{n , c})$ and
\begin{equation}\label{dstar_g}
{V^*}_{n, c}= \{ j(\nu ) \in j(\mathcal{M}):  \sum_{a=1}^g n_a \left\| S_{E,a}(j,X_{a})^{-1/2}~\tan_{j(\bar{X}_{E})}(j(\overline{X}_{a,E})-j(\nu) )\right\|^2  \leq {d^*}^{(g)}_{1-c} \}
\end{equation}
where ${d^*}^{(g)}_{1-c} $ is the upper $100(1-c)\%$ point of the values
\begin{equation}
\sum_{a=1}^g n_a \left\|  S_{E,a}(j,X^{*}_{a})^{-1/2}~\tan_{j({\bar{X}^*}_{E})}(j(\overline{X^*}_{a,E})-j(\bar{X}_{E}))\right\|^2
\end{equation}
\end{enumerate}
where ${\bar{X}^*}_{E}$ is the extrinsic pooled re sampled mean given by
\begin{equation}\label{E_Polled_sample_mean}
j({\bar{X}^*}_{E})= P_j \left( \frac{n_1}{n} j(\bar{X}^{*}_{1, E}) + \cdots + \frac{n_g}{n}j(\bar{X}^{*}_{g, E}) \right)
\end{equation}
among the bootstrap re samples. Both of the regions given by \eqref{dstar_g} and \eqref{cstar_g} have coverage erro $O_{p}(n^{-2}).$
\end{corollary}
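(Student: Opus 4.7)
The plan is to reduce this to the known second-order correctness of the bootstrap for studentized (asymptotically pivotal) statistics, after checking that the statistics defining $U^{*}_{n,c}$ and $V^{*}_{n,c}$ are precisely of this form. Set
\begin{equation}
T_n(\nu)=\sum_{a=1}^g n_a\,\bigl\|S_{E,a}(j,X_a)^{-1/2}\tan_{j(\bar{X}_E)}\bigl(j(\bar{X}_{a,E})-j(\nu)\bigr)\bigr\|^2, \notag
\end{equation}
and let $T_n^{*}$ be its bootstrap analogue, with each $X_{a,i}$ replaced by $X_{a,i}^{*}$ and $\nu$ replaced by $\bar X_E$ (and similarly for the version using $\tan_{j(\nu)}$). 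Theorem \ref{Theorem31}(b) already shows $T_n(\mu_E)\to_d\chi^{2}_{gp}$, so $T_n(\mu_E)$ is asymptotically pivotal. The bootstrap quantiles $c^{*(g)}_{1-c}$ and $d^{*(g)}_{1-c}$ are by construction the conditional upper $c$-quantiles of $T_n^{*}$ given the data, and coverage of $\mu_E$ by $C^{*(g)}_{n,c}$ (resp.\ $D^{*(g)}_{n,c}$) is exactly $P(T_n(\mu_E)\le c^{*(g)}_{1-c})$ (resp.\ $\le d^{*(g)}_{1-c}$).

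Next I would show that $T_n(\mu_E)$ admits an Edgeworth expansion in powers of $n^{-1/2}$. The idea is to pass to the ambient space via the embedding $j$: writing $Y_{a,i}=j(X_{a,i})\in\mathbb{R}^N$ and $\bar Y_{a}=j(\bar X_{a,E})$ as ambient sample means plus a smooth projection through $P_j$, the statistic $T_n(\mu_E)$ becomes a smooth function of $(\bar Y_1,\ldots,\bar Y_g,S_{n_1},\ldots,S_{n_g})$ evaluated at its population value (this is where $j$-nonfocality, nondegeneracy of $\Sigma_{a,E}$, and smoothness of $P_j$ near $\mu$ enter). Assuming a Cramér-type condition and enough moments of $j(X_{a,1})$, the standard Bhattacharya–Ghosh/Hall machinery then yields an Edgeworth expansion
\begin{equation}
P\bigl(T_n(\mu_E)\le x\bigr)=G_{gp}(x)+n^{-1}q_1(x)+O(n^{-2}) \notag
\end{equation}
where $G_{gp}$ is the $\chi^{2}_{gp}$ cdf and $q_1$ is a smooth function with polynomial coefficients in the first few moments of the $Y_{a,i}$'s. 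The same argument, conditional on the data, applied to the bootstrap world in which the population distributions are the $\hat Q_{n_a}$, gives
\begin{equation}
P^{*}\bigl(T_n^{*}\le x\bigr)=G_{gp}(x)+n^{-1}\hat q_1(x)+O_p(n^{-2}), \notag
\end{equation}
with $\hat q_1$ the plug-in version of $q_1$. Because the leading two terms of $T_n(\mu_E)$ are asymptotically pivotal, $\hat q_1-q_1=O_p(n^{-1})$, and so the two expansions agree up to order $n^{-2}$. Inverting at the nominal level $1-c$ gives $c^{*(g)}_{1-c}-x_{1-c}=O_p(n^{-2})$, and hence
\begin{equation}
P\bigl(\mu_E\in C^{*(g)}_{n,c}\bigr)=1-c+O(n^{-2}). \notag
\end{equation}
The argument for $D^{*(g)}_{n,c}$ is identical, the only difference being that $\tan_{j(\nu)}$ is replaced by the data-dependent projection $\tan_{j(\bar X_E)}$, which is itself $\sqrt{n}$-consistent and smooth in the ambient variables, so it only contributes additional terms that the bootstrap plug-in also captures to the required order.

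The main obstacle I anticipate is the technical verification of the Edgeworth expansion for the manifold-valued statistic with unequal sample sizes $n_1,\ldots,n_g$: one has to (i) check a joint Cramér condition on the vector $(Y_{1,1},\ldots,Y_{g,1})$, (ii) keep track of the asymptotic frame $n_a/n\to\lambda_a$ so that the expansion is genuinely in powers of $n^{-1/2}$ rather than in several incompatible scales, and (iii) control the smoothness of the projections $P_j$, the tangential maps $\tan_{j(\bar X_E)}$, and the inverses $S_{E,a}^{-1/2}$ on a neighbourhood of the true parameter with probability tending to $1$. Once these are in place, the rest is a clean invocation of the bootstrap second-order correctness theorem for studentized pivots.
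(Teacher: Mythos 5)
First, a point of comparison: the paper does not actually prove Corollary~\ref{c2} --- it states it as a direct import of Bhattacharya and Patrangenaru (2005) and of Corollary~3.2 in Bhattacharya and Bhattacharya (2012). Your proposal therefore goes beyond the paper by reconstructing the Edgeworth-expansion machinery that underlies those citations, and the skeleton is right: reducing coverage of ${C^*}_{n,c}^{(g)}$ and ${D^*}_{n,c}^{(g)}$ to $P(T_n(\mu_E)\le {c^*}^{(g)}_{1-c})$, placing $T_n(\mu_E)$ in the smooth-function model via $j$, $P_j$ and the sample covariances, using Theorem~\ref{Theorem31} for asymptotic pivotality, and matching real-world and bootstrap-world expansions. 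Your handling of the multi-sample scales ($n_a/n\to\lambda_a$ so that everything expands in a single scale $n^{-1/2}$) and of the data-dependent projection $\tan_{j(\bar X_E)}$ is also consistent with how the cited literature treats these statistics.

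There are, however, two genuine gaps. The crucial one is the step ``because the leading two terms of $T_n(\mu_E)$ are asymptotically pivotal, $\hat q_1-q_1=O_p(n^{-1})$.'' This is false as stated: $\hat q_1$ is a plug-in of empirical moments, so $\hat q_1-q_1=O_p(n^{-1/2})$, and your argument as written only delivers quantile error $O_p(n^{-3/2})$ and hence coverage error $O(n^{-3/2})$, not the claimed $O_p(n^{-2})$. The correct mechanism, which is the heart of the Chandra--Ghosh/Hall results behind Corollary 3.2 of Bhattacharya--Bhattacharya (2012), is twofold: the expansion of the quadratic (even) pivot proceeds in integer powers of $n^{-1}$ with no $n^{-1/2}$ or $n^{-3/2}$ terms, and coverage probability is an \emph{expectation}, so the mean-zero $O_p(n^{-1/2})$ fluctuation of $\hat q_1$ contributes to coverage only through its expectation and its correlation with the coverage indicator, both of order $n^{-1}$; multiplied by the $n^{-1}$ prefactor this gives $O(n^{-2})$. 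Without this parity-plus-expectation argument the second-order claim of the corollary is not established. Secondly, your item (i), a ``joint Cram\'er condition on $(Y_{1,1},\dots,Y_{g,1})$,'' cannot hold as posed: $Y_{a,1}=j(X_{a,1})$ is supported on the embedded submanifold $j(\mathcal M)$, which has Lebesgue measure zero in $\mathbb R^N$, so the Cram\'er condition in the ambient space fails identically. It must be imposed on the $p$-dimensional linearization (the tangential coordinates of the statistic), or replaced by the assumption, which the paper itself invokes in Section~\ref{sc4} for the projective shape case, that the distributions have a nonzero absolutely continuous component on the manifold. With these two repairs your outline becomes a complete proof of the corollary.
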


\noindent Note that $ S_{E,a}(j,X^{*}_{a})$
\begin{align}
S_{E,a}(j,X^{*}_{a})&=\left[ \left[\sum_{a=1}^{m} d_{\overline{j^{(p)}(X^{*})}}P_{j}(e_b) \cdot e_{i}(j(\bar{X}^{*}_{E}))  ~e_{i}(j(\bar{X}^{*}_{E}))\right]_{i=1,...,p} \right] \cdot ~S^{*}_{n_a}\notag\\
&~~~~\left[ \left[\sum_{a=1}^{m} d_{\overline{j^{(p)}(X^{*})}}P_{j}(e_b) \cdot e_{i}(j(\bar{X}^{*}_{E})) e_{i}(j(\bar{X}^{*}_{E})) \right]_{i=1,...,p} \right]^T \notag
\end{align}
where $\displaystyle{S^{*}_{n_a}= (n_a )^{-1} \Sigma_{i=1}^{n_a}(j(X^{*}_{a,i})- j(\bar{X}^{*}_{E})) (j(X^{*}_{a,i})- j(\bar{X}^{*}_{E}))^T   }.$

\noindent We now express the following test statistics that will be used in our analysis and are tied to the confidence regions  mentioned above.

\begin{proposition}\label{t2}
 Let $\{ X_{a,i} \}_{i=1}^{n_a}$ for $a=1,...,g$ be random samples from the $j$-nonfocal distributions $\mathcal{Q}_a.$  Let $\mu_a = E(j(X_{a,1}))$ and assume $j(X_{a,1})$'s have finite second-order moments and the extrinsic covariance matrices
 $\Sigma_{a,E}$ of $X_{a,1}$ are nonsingular.
 \begin{enumerate}[(a)]
 \item Then the distribution of
 $T_{c}({X}^{(g)},\hat{Q}^{(g)})= \sum_{a=1}^g n_a  \left\| \Sigma_{a,E}^{-1/2}~\tan_{ j (\mu_E)}(j(\overline{X}_{a,E})- j (\mu_E) )\right\|^2$  can be approximated by the bootstrap distribution function of\\
 ${T_{c}(X^{*(g)},\hat{Q}^{(g)})}=\sum_{a=1}^g n_a \left\|  S_{E,a}(j,X^{*}_{a})^{-1/2}~\tan_{j(\bar{X}_{E})}(j(\overline{X}^{*}_{a,E})-j(\bar{X}_{E}))\right\|^2$
\item Similarly, the distribution of  $T_{d}({X}^{(g)},\hat{Q}^{(g)})=\sum_{a=1}^g n_a \left\| S_{E,a}^{-1/2}~\tan_{j(\bar{X}_{E})}(j(\overline{X}_{a,E})-j(\mu_E) )\right\|^2 $  can be approximated by the bootstrap distribution function of\\
 ${T_{d}(X^{*(g)},\hat{Q}^{*(g)})}=\sum_{a=1}^g n_a \left\|  S_{E,a}^{*}(j,{X}_{a}^{*})^{-1/2}~\tan_{j({\bar{X}^*}_{E})}(j(\overline{X^*}_{a,E})-j(\bar{X}_{E}))\right\|^2$
 \end{enumerate}
  with coverage error $O_p(n^{-2})$.
\end{proposition}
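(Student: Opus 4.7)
The plan is to reduce Proposition \ref{t2} to a standard Edgeworth-expansion argument for studentized smooth functions of a sample mean, transferred from $\mathbb{R}^N$ to $\mathcal{M}$ via the embedding $j$. The two statistics $T_c$ and $T_d$ are both quadratic forms in an asymptotically $N(0_p,I_p)$ vector, obtained by studentizing with a consistent estimator of the extrinsic covariance; Theorem \ref{Theorem31} already supplies the $\chi^2_{gp}$ limit in each case. What remains is to upgrade this weak-limit statement to second-order accuracy for the bootstrap approximation. I would carry out the argument in the following order.

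First I would rewrite each statistic as a smooth function of the stacked sample mean $\bar Y_n=(j(\bar X_{1,E}),\dots,j(\bar X_{g,E}))\in(\mathbb R^N)^g$ and the pooled empirical second-moment matrices that enter $S_{E,a}(j,X_a)$. Using the closedness of the embedding and the $j$-nonfocality assumption, the projection $P_j$, its differential $d_\mu P_j$, and the local frame field $(e_1,\dots,e_N)$ are smooth in a neighborhood of $\lambda_1 j(\mu_{1,E})+\cdots+\lambda_g j(\mu_{g,E})$. Consequently $T_c(X^{(g)},\hat Q^{(g)})$ and $T_d(X^{(g)},\hat Q^{(g)})$ are smooth functions of sample moments up to order two of the vectors $j(X_{a,i})$, and they are studentized (pivotal) because the inner matrix is exactly the consistent estimator of the covariance appearing in the CLT.

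Next I would invoke the Edgeworth expansion for smooth studentized functions of multivariate sample means (Bhattacharya--Ghosh; Hall 1992, Ch.~5); under the finite-moment hypothesis and the Cramér-type condition on $j(X_{a,1})$ (which holds generically since $j(\mathcal M)$ is a submanifold of $\mathbb R^N$ and the $\mathcal Q_a$ are $j$-nonfocal with a nontrivial absolutely continuous component), both $T_c$ and $T_d$ admit an Edgeworth expansion of order $o(n^{-1})$ with polynomial coefficients depending on the population moments and on the partial derivatives of $P_j$. Because the statistics are symmetric (quadratic) in the normalized mean, the odd-order polynomials vanish when integrated against the $\chi^2_{gp}$ density, so the leading discrepancy between the true distribution and its Gaussian approximation is already $O(n^{-1})$. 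Applying the same expansion conditionally on the data to the bootstrap versions $T_c(X^{*(g)},\hat Q^{(g)})$ and $T_d(X^{*(g)},\hat Q^{*(g)})$, and using the $n^{-1/2}$-rate consistency of empirical moments for the population moments, the polynomial coefficients of the two expansions agree up to $O_p(n^{-1})$, and the remaining terms cancel at the next order, yielding
\[
\sup_x \bigl|P(T_\bullet\le x)-P^{*}(T_\bullet^{*}\le x)\bigr|=O_p(n^{-2}),\qquad \bullet\in\{c,d\}.
\]

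The main obstacle, and the step I would spend most care on, is justifying the coefficient matching for the bootstrap distribution once the projection $P_j$ and the tangent operator $\tan_{j(\cdot)}$ are inserted: both are evaluated at data-dependent points ($\bar X_E$ or $\bar X_E^{*}$), and one must check that replacing $\mu_E$ by $\bar X_E$ (respectively $\bar X_E^{*}$) perturbs the Edgeworth polynomials only at order $O_p(n^{-1/2})$ in such a way that the leading-order bias in the bootstrap cumulants matches that of the true statistic. This is where the pivotal structure is essential: the first-order bias terms in the cumulants of $T_\bullet$ depend only on moments of $j(X_{a,i})$ that the bootstrap estimates with error $O_p(n^{-1/2})$, and the resulting $O_p(n^{-3/2})$ error in the one-term expansion is symmetrized to $O_p(n^{-2})$ by the same parity argument that gives two-sided Edgeworth accuracy. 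The verification of the Cramér condition and the uniformity of the expansion on $\mathcal M$ (via the closedness of $j$) are routine given the hypotheses, and together with Corollary \ref{c2} they complete the proof.
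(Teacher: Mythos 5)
You should know at the outset that the paper contains no proof of Proposition \ref{t2}: it is stated as an assertion, inherited from the one-sample bootstrap theory of Bhattacharya and Patrangenaru \cite{BhPa:2005} and Corollary 3.2 of Bhattacharya and Bhattacharya \cite{BhBh:2012}, with independence across the $g$ groups letting the studentized quadratic forms add. Your Edgeworth-expansion plan is therefore not an alternative route but a reconstruction of exactly the machinery behind those citations --- the Bhattacharya--Ghosh smooth-function model for studentized functions of sample means, plus the parity argument special to pivotal chi-squared-type statistics --- so in outline your proof is the intended one, and it supplies detail the paper leaves entirely to references.

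That said, two concrete problems need fixing. First, the Cram\'er condition: you claim it ``holds generically since $j(\mathcal M)$ is a submanifold of $\mathbb R^N$,'' but this is backwards --- precisely because $j(\mathcal M)$ is a lower-dimensional closed submanifold it is Lebesgue-null in $\mathbb R^N$, so the law of $j(X_{a,1})$ is singular and the classical Cram\'er condition in the ambient space fails for \emph{every} $\mathcal Q_a$. The expansion must instead be carried out for the $p$-dimensional tangential coordinates $\tan_{j(\mu_E)}(\cdot)$ (equivalently, in a chart around $\mu_E$), and the Cram\'er condition there requires each $\mathcal Q_a$ to have a nonzero component absolutely continuous with respect to the volume measure on $\mathcal M$ --- a hypothesis the proposition does not state, and which the paper itself only introduces later, in Section \ref{sc4}, exactly when invoking these bootstrap claims. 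Without it (say, for distributions supported on finitely many points) the $O_p(n^{-2})$ rate is not obtainable by your argument, so you must import that hypothesis explicitly. Second, your concluding display, $\sup_x \lvert P(T_\bullet \le x) - P^{*}(T^{*}_\bullet \le x)\rvert = O_p(n^{-2})$, overstates what the method gives and contradicts your own preceding sentence: since the bootstrap estimates the $n^{-1}$ Edgeworth coefficient only to accuracy $O_p(n^{-1/2})$, the sup-distance between the two distribution functions is $O_p(n^{-3/2})$; the $O(n^{-2})$ rate is a statement about \emph{coverage} of the resulting regions, obtained because the fluctuation in the estimated coefficient is centered to first order and hence affects coverage only at the next order (Hall's symmetric/pivotal-statistic argument, 1992). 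That weaker statement is in fact all the proposition asserts (``coverage error $O_p(n^{-2})$''), so the repair is simply to delete the sup-norm display and state the conclusion for the coverage probabilities of ${C^*}_{n,c}^{(g)}$ and ${D^*}_{n,c}^{(g)}$ from Corollary \ref{c2}.
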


 \noindent Note that ${T(X^{*(g)},\hat{Q}^{(g)})}$ is obtained from ${T(X^{(g)},\hat{Q}^{(g)})}$ by substituting $X^{(g)}_1= (X_{1,1},\cdots, X_{g,1})^T$ with  re samples $X^{* (g)}_1= (X^{*}_{1,1},\cdots, X^{*}_{g,1})^T.$

\noindent Using the bootstrap analogue in the previous Proposition \ref{t2} yields  simpler method for finding  $100(1-c)\%$ confidence regions. We will utilize the tests statistics expressed above to conduct our analysis with confidence regions
 $C^{*}_{n,c}$  and $D^{*}_{n, c}$ as shown in the Corollary \ref{c2}.

\section{MANOVA on $(\mathbb R P^{3})^q$}\label{sc4}
\noindent We start with the 3-dimensional real projective space $\mathbb RP^3,$ set of 1-dimensional linear subspaces of $\mathbb R^{4}.$ $\mathbb RP^3$ has a 3D manifold structure (see Patrangenaru and Ellingson(2015)\cite{PaEl:2015},p.106). A point $p=[x] \in \mathbb R P^3$, is the equivalence class of $x=(x^1 x^2 x^3 x^4)^T \in \mathbb R^{4}\backslash\{0\},$
where two nonzero vectors in $\mathbb R^4$ are equivalent, if one is a scalar multiple of the other. The point $p$ can be represented as $p = [x^1 : x^2: x^3 : x^4]$ (homogeneous coordinates notation). One may also represent $\mathbb RP^3$ as the sphere $S^3$ with the antipodal points identified. We will often refer to this identification as the {\em spherical representation} of the real projective space. $ \mathbb RP^3$ is an embedded manifold with the VW-embedding $j: \mathbb{R}P^3 \rightarrow \mathcal S(4,\mathbb R),$ given by
\begin{eqnarray}\label{eq:VW}
j([x]) = xx^T, x^Tx=1.
\end{eqnarray}
 Given a random object $Y$ on $\mathbb R P^3,$ $Y=[X],X^TX=1,$ such that $E(XX^T)$ has a simple largest eigenvalue, one can show that the VW (extrinsic)-mean $\mu_j =  [\gamma],$ where $\gamma $ is a unit eigenvector of $E(XX^T)$ corresponding to this largest eigenvalue (see Bhattacharya and Patrangenaru \cite{BhPa:2005}).

\noindent Our analysis will be conducted on $P\Sigma_{3}^{k}$, the  projective shape space of 3D $k$-ads in $\mathbb RP^m$ for which $\pi=([u_1], \dots, [u_5])$ is a projective frame in $\mathbb RP^3.$ $P\Sigma_{3}^{k}$ is homeomorphic to the manifold $\left( \mathbb R P^3 \right)^{k-5}$ with $k-5=q$ (see Patrangenaru et. al (2010)\cite{PaLiSu:2010}). The embedding on this space is the VW (Veronese-Whitney) embedding given by
\begin{eqnarray}\label{eq:VW-prsh}
j_k: \left( \mathbb R P^3 \right)^q \to \left( S(4, \mathbb R )\right)^q ~~~~~~~~~~~~~~~~~~~~~~~~~~~\notag\\
j_k([x_1], \dots, [x_q]) = (j([x_1]), \dots, j([x_q])),
\end{eqnarray}
with $j:\mathbb{R}P^{3}\rightarrow S_+(4,\mathbb R)$ the embedding given in \eqref{eq:VW}. Additionally $j_k$ is an equivariant embedding w.r.t. the group $(S_+(4,\mathbb R))^q$ and has the corresponding projection
\begin{eqnarray}\label{VW_Proj}
P_{j_k}: \left( S_{+}(4, \mathbb R )\right)^q \backslash  \mathcal F_q \to  j_k\left( \mathbb R P^3 \right)^q \notag\\
P_{j_k}(A_1, \dots, A_q)= \left(  j([m_1]), \dots, j[m_q])\right)
\end{eqnarray}
where $m_1, \dots, m_q$ are unit eigenvectors of $A_1, \dots, A_q$ (respectively) corresponding to the respective highest eigenvalues of those nonnegative definite symmetric matrices. Let $Y$ be be a random object from a VW distribution $Q$ on $(\mathbb{R}P^3)^{q},$ where $Y = (Y^1, \dots, Y^q),$ and $Y^s= [X^s] \in \mathbb RP^3$ for all $s = \overline{1,q}.$  The VW mean is given by
\begin{equation}\label{eq:VW-mean-q}
\mu_{j_k}=([\gamma_{1}(4)], \cdots,[\gamma_q(4)]),
\end{equation}
 where, for $ s = \overline{1,q},~ \lambda_s(r)$ and $\gamma_s(r), r=1, \dots , 4$ are the eigenvalues in increasing order and the corresponding eigenvectors of $E\left[ X^s (X^s)^T \right].$

\noindent In case of a VW-nonfocal random object $[X]$ on $\mathbb RP^3,$ we know that $\mu_{E,j}=[\nu_4],$ where $\lambda_r$ and $\nu_r,~r=1,2,3,4,$ are eigenvalues in increasing order and corresponding unit eigenvectors of $\mu=E[XX^T]$. Similarly, given i.i.d.r.o's $Y_i =[X_i], i=1,\dots, n$ from $Q$ on $\mathbb RP^3,$ their VW sample mean, is given by $\overline{X}_{E,j}= [g(4)]$, where $d(r)$ and $g(r) \in \mathbb R^4,~r=1,2,3,4,$ are eigenvalues in increasing order and corresponding unit eigenvectors of $J= \frac{1}{n} \sum_{i=1}^{n} X_{i} X_{i}^T.$

\noindent We now recall from Bhattacharya and Patrangenaru (2005) \cite{BhPa:2005} that the statistic
$$T([X],Q)=n \|S(j,X)^{-1/2} \tan_{j(\mu_{E,j})} \left( j(\overline{X}_{E,j}) - j(\mu_{E,j})\right) \|^2,$$
in case of a random sample from a distribution on $\mathbb RP^3,$  has the form  $T([X],Q)=T([X],[\nu_4])$ given by
\begin{equation}\label{Result_Original}
T([X], [\nu_4])= n g(4)^T ~[(\nu_r)]_{r=1,2,3}S(j,X)^{-1} [(\nu_r)]_{r=1,2,3}^{T} ~g(4),
\end{equation}
where the entries of the sample VW-covariance matrix are
\begin{equation}\label{WV-cov-proj}
S(j,X)_{ab}= n^{-1} (d(4)-d(a))^{-1} (d(4)- d(b))^{-1} \times \sum_{i=1}^n (g(a) \cdot X_i)(g(b) \cdot X_r)(g(4) \cdot X_i)^2,
\end{equation}
for $a,b=1,2,3.$

\noindent If we project on the tangent space to the VW-sample mean, we get the statistic
\begin{eqnarray}
T([X],\hat Q)=T([X], [g(4)])= \| S(j,X)^{-1/2} \tan_{j(\overline{X}_{E,j})} \left( j(\overline{X}_{E,j}) - j(\mu_{E,j})\right) \|^2 = \notag\\
= n ~\nu_4^T ~[g(r)]_{r=1,2,3} S (j,X)^{-1} [g(r)]_{r=1,2,3}^{T} ~\nu_4,
\end{eqnarray}
where $S(j,X)$ is also given in \eqref{WV-cov-proj}, and from the Slutsky's theorem, asymptotically $T([X], [\nu_4])$  and $T([X], [g(4)])$ both have a $\chi^{2}_{3}$ distribution (see Bhattacharya and Patrangenaru (2005) \cite{BhPa:2005}).

\noindent Before we express our statistics of interest, it will be important to note another result from { Crane and Patrangenaru (2011) \cite{CrPa:2011}} concerning the statistics
$$T(Y,\mu_{E,j_k})= n \| S_{\bar Y} (j_k, Y)^{-1/2} \tan_{j(\overline{Y}_{E,j_k})} \left( j(\overline{Y}_{E,j_k}) - j(\mu_{E,j_k})\right) \|^2$$
\noindent And this Hotelling $T^2$ type statistic is given by
\begin{equation}
T(Y, ([\gamma_{1}(4)], \cdots,[\gamma_q(4)]))= n~\left( \gamma_{1}(4)^T D_1 \dots \gamma_{q}(4)^T D_q \right)~~ S_{\bar Y}(j_k,Y)^{-1}\left( \gamma_{1}(4)^T D_1 \dots \gamma_{q}(4)^T D_q \right)^T
\end{equation}
where for $s=1,\dots,q$ we have $D_s = (g_s(1)~ g_s(2) ~g_{s}(3)) \in \mathcal M(4,3,\mathbb R)$ and for a pair of indices $(s,a), s=1,\dots,q$ and $a=1,2,3$ in their lexicographic order we have
\begin{equation}
S_{\bar Y}(j_k,Y)_{(s,a), (t,b)}=n^{-1} (d_s(4)-d_s(a))^{-1} (d_t(4)- d_t(b))^{-1} \times \sum_{i=1}^{n} (g_s(a) \cdot X^s_i)(g_t(b) \cdot X^t_i)(g_s(4) \cdot X^s_i) (g_t(4) \cdot X^t_i)
\end{equation}

\noindent In the next theorem we will take advantage of these results. { \begin{align}
\label{MANOVA_Shape}
H_0 &:\ \mu_{1,E}=\mu_{2,E}=...=\mu_{g,E}= \mu_E,\\
H_a &:\ at\ least\ one\ equality\ \mu_{a,E}=\mu_{b,E}, 1 \leq a < b\leq g \ does\ not\ hold. \notag
\end{align}
 }We aim to have an explicit representation of the expressions,
\begin{eqnarray}
T_{c}\left( Y^{(g)},\mu^{(p)}_E \right)={ n_a} \sum_{a=1}^{g} \left\| S_{\bar Y}(j_k, Y_a)^{-1/2} \tan_{j_{k}\left(\mu^{(p)}_E \right)} \left( j_{k} (\overline{Y}_{a,E}) - j_{k} \left(\mu^{(p)}_E \right) \right) \right\|^2 \\
T_{d} \left( Y^{(g)},\overline{Y}^{(p)}_E \right)={ n_a} \sum_{a=1}^{g} \left\| S_{\bar Y}(j_k, Y_a)^{-1/2} \tan_{j_{k} \left(\overline{Y}^{(p)}_E \right)} \left(j_{k}(\overline{Y}_{a,E}) - j_{k}\left(\mu^{(p)}_E \right) \right) \right\|^2
\end{eqnarray}

\noindent where $\mu_{a,E}= ([\nu_{1}^{a}(4)], \dots, [\nu_{q}^{a}(4)])$ are the VW mean from distribution $Q_a$ (of $Y_{r_a}$) and $(\eta_{s}^{a}(r), \nu_{s}^{a}(r)),$ { $r=1,\dots,4$,} are eigenvalues and corresponding unit eigenvectors of $E(X_{a,1}^{s}(X_{a,1}^{s})^T]$.  The corresponding VW sample mean is given by $\overline{Y}_{a,E}=([g_{1}^{a}(4), \dots, [g_{q}^{a}(4)]),$ where for each $s=1, \dots, q$ and $r=1,\dots, 4$, $(d_{s}^{a}(r), g_{s}^{a}(r))$ are eigenvalues in increasing order and corresponding unit eigenvectors of $J_{s}^{a}= \frac{1}{n_a} \sum_{i=1}^{n_a}X^{s}_{a,i} (X^{s}_{a,i})^T.$ Also $\mu^{(p)}_E$ is the VW pooled mean given by
\begin{eqnarray}
j_{k} \left(\mu^{(p)}_E \right)= P_{j_k}\left( \sum_{a=1}^{g} \lambda_a j_{k}(\mu_{a,E})  \right)\\
\mu^{(p)}_E=([\gamma_{1}^{(p)}(4)], \dots, [\gamma_q^{(p)}(4)]),
\end{eqnarray}
where for $s=1,\dots, q, \gamma_{1}^{(p)}(4)$ is the eigenvector corresponding to the largest eigenvalue of the $s-th$ axial component of the pooled matrix
with weights $\lambda_a, a=1,\dots, g:$
$$\sum_{a=1}^g \frac{\lambda_a}{\lambda}E(X_{a,1}X_{a,1}^T).$$

The pooled VW-sample mean $\overline{Y}^{(p)}_E$ is given by
\begin{eqnarray}
j_{k} \left(\overline{Y}^{(p)}_E \right)= P_{j_k}\left( \sum_{a=1}^{g} \frac{n_a}{n} j_{k}(\overline{Y}_{a,E})  \right)\\
\overline{Y}^{(p)}_E= ([{\bf g}_{1}^{(p)}(4)], \dots, [{\bf g}_{q}^{(p)}(4)])
\end{eqnarray}
where for $s=1,\dots,q$,  ${\bf d}_{s}^{(p)}(r)$ and ${\bf g}_{s}^{(p)}(r) \in \mathbb R^4,~r=1,2,3,4,$ are eigenvalues in increasing order and corresponding unit eigenvectors of the matrix { $J^{(p)}=\sum_{a=1}^{g} \frac{n_a}{n} j_{k}(\overline{Y}_{a,E}) .$}\\

\noindent We now express the following matrices
\begin{eqnarray}
{\bf C}_s= (\gamma_{s}^{(p)}(1)~ \gamma_{s}^{(p)}(2)~\gamma_{s}^{(p)}(3)) \in \mathcal M (4,3:\mathbb R)\\
{\bf D}_s=({\bf g}_{s}^{(p)}(1)~ {\bf g}_{s}^{(p)}(2) ~{\bf g}_{s}^{(p)}(3)) \in \mathcal M (4,3:\mathbb R)
\end{eqnarray}
{
\begin{corollary}\label{ManoveTheo2}
Assume $j_k$ is the VW embedding of $(\mathbb RP^3)^q$ and $\{Y_{a, r_a}\}_{r_a=1, \dots,n_a},~a=1, \dots,g$ are i.i.d.r. objects random from  the $j_k$-nonfocal probability measures $Q_a$ on $(\mathbb RP^m)^q,$ that have non degenerate $j_k$-extrinsic covariance matrices. Consider the statistics
\begin{enumerate}[(i)]
\item $T_{c}\left( Y^{(g)},\mu^{(p)}_E \right)= \sum_{a=1}^{g} n_a ~\left( (g_{1}^{a}(4))^T {\bf C}_1 \dots (g_{s}^{a}(4))^T {\bf C}_q \right)~S_{\bar Y_a}(j_k,Y_a)^{-1}\left(g_{1}^{a}(4)^T {\bf C}_1 \dots g_{q}^{a}(4)^T {\bf C}_q \right)^T$
\item $T_{d} \left( Y^{(g)},\overline{Y}^{(p)}_E \right)= \sum_{a=1}^{g} n_a ~\left[(  \gamma_{1}^{(p)}(4)-g_{1}^{a}(4) )^T {\bf D}_1 \dots (\gamma_{q}^{(p)}(4)-g_{q}^{a}(4) )^T {\bf D}_q \right]~~ $\\
$~~~~~~~\quad ~~\quad~~ \quad~~~~~\quad~~\quad~~~S_{\bar Y_a}(j_k,Y_{a})^{-1}$\\
$~~~~~~~\quad ~~\quad~~ \quad~~~\quad~~~~~\left[(  \gamma_{1}^{(p)}(4)-g_{1}^{a}(4) )^T {\bf D}_1 \dots (\gamma_{q}^{(p)}(4)-g_{q}^{a}(4) )^T {\bf D}_q \right]^T.$
\end{enumerate}
where
\begin{eqnarray}
 S_{\bar Y_a}(j_k,Y_a)_{(s,c)(t,b)}= n_{a}^{-1} ({\bf d}_{s}^{(p)}(4) - {\bf d}_{s}^{(p)}(c))^{-1} ({\bf d}_{t}^{(p)}(4) - {\bf d}_{t}^{(p)}(b))^{-1} \notag\\
 \times ~~~~~~~~~~~~~~~~~~~~~~~\notag\\
 ~~~~~\quad~~~~~~~~~\quad~~~~~~~~~~~~~~~\sum_{i} ({\bf g}^{(p)}_{s}(c)\cdot X_{a,i}^{s})({\bf g}^{(p)}_{t}(b) \cdot X_{a,i}^{t})({\bf g}^{(p)}_{s}(4) \cdot X_{a,i}^{s})({\bf g}^{(p)}_{t}(4) \cdot X_{a,i}^{t}) \notag
 \end{eqnarray}
  and $s,t=1, \dots,q$ and $c,b=1, \dots,m$.
If $\frac{n_a}{n} \to \lambda_a >0,$ as $n \to \infty,$ then both $T_{c}\left( Y^{(g)},\mu^{(p)}_E \right)$ and $T_{d} \left( Y^{(g)},\overline{Y}^{(p)}_E \right)$ have asymptotically a $\chi^{2}_{3q}$ distribution.
\end{corollary}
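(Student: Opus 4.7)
The plan is to derive Corollary \ref{ManoveTheo2} as a direct specialization of Theorem \ref{Theorem31} to the product VW embedding $j_k:(\mathbb{R}P^3)^q\to(S(4,\mathbb{R}))^q$, so that the asymptotic $\chi^{2}$ conclusion comes essentially for free from that theorem once the abstract tangent projections and covariance matrices are rewritten in eigenvector coordinates. Thus the only non-trivial task is to verify that the abstract scalar expressions in parts (a) and (b) of Theorem \ref{Theorem31}, when specialized to the product VW setting with $\mathcal{M}=(\mathbb{R}P^3)^q$ and $p=3q$, match the explicit matrix-product formulas in (i) and (ii) of the corollary.

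The reduction for part (i) follows by applying, group by group, the single-sample $T^{2}$ identity recorded in \eqref{Result_Original} and its $q$-fold product generalization due to Crane and Patrangenaru. For a single group $Y_a$ centered at the pooled VW-mean $\mu_E^{(p)}=([\gamma_1^{(p)}(4)],\dots,[\gamma_q^{(p)}(4)])$, the tangent space $T_{j_k(\mu_E^{(p)})}j_k((\mathbb{R}P^3)^q)$ splits as a block product, each block having the columns of $\mathbf{C}_s$ as an orthonormal frame (these being the non-top eigenvectors of the $s$th axial block of the linearized pooled mean). Projecting $j_k(\bar Y_{a,E})-j_k(\mu_E^{(p)})$ onto this frame and squaring with respect to $S_{\bar Y_a}(j_k,Y_a)^{-1}$ yields exactly the $a$th summand in (i); independence of the $g$ samples combines these summands into the closed form claimed, and Theorem \ref{Theorem31}(a) then supplies the $\chi^{2}$ limit.

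For part (ii) the same device is used, with the tangent space taken instead at $j_k(\bar Y_E^{(p)})$, whose tangent frame at the $s$th factor consists of the columns of $\mathbf{D}_s$. The vectors being projected are now $j_k(\bar Y_{a,E})-j_k(\bar Y_E^{(p)})$, and in coordinates with respect to this new frame the projection is governed factor-wise by the differences $\gamma_s^{(p)}(4)-g_s^a(4)$, producing the matrix expression displayed in (ii). Because $\bar Y_E^{(p)}\to_p\mu_E^{(p)}$ (a consequence of the general consistency argument of Section \ref{sc3}) and $S_{\bar Y_a}(j_k,Y_a)$ consistently estimates $\Sigma_{a,E}$, Slutsky's theorem lets us swap $\tan_{j_k(\mu_E^{(p)})}$ for $\tan_{j_k(\bar Y_E^{(p)})}$ and the population covariance for its sample version without changing the limit, so the conclusion of Theorem \ref{Theorem31}(b) transfers and the limit is again $\chi^{2}_{3q}$.

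The main obstacle is the bookkeeping for the differential $d_\mu P_{j_k}$ at the linearized pooled mean, which is what produces the eigenvector-valued matrices $\mathbf{C}_s$ and $\mathbf{D}_s$ together with the block structure of $S_{\bar Y_a}(j_k,Y_a)$. Fortunately this differential has already been computed for the single-factor embedding in Bhattacharya and Patrangenaru (2005) and for the product version in Crane and Patrangenaru (2011); the remaining work reduces to assembling the block-diagonal form and verifying that the $(s,c),(t,b)$ entries of $S_{\bar Y_a}(j_k,Y_a)$ match the eigenvalue-and-inner-product expression stated in the corollary, which is a direct propagation of \eqref{WV-cov-proj} through the $q$-fold product.
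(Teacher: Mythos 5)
Your proposal follows essentially the same route as the paper's proof: part (i) is obtained group-by-group as the product-space extension of the eigencoordinate identity \eqref{Result_Original} (via Crane--Patrangenaru), part (ii) by expressing the tangential projection at the pooled sample mean in the frames ${\bf D}_s$ (the paper does this by adding and subtracting $j_k(\overline{Y}^{(p)}_E)$ inside the norm), and the $\chi^2_{3q}$ limit is inherited from Theorem \ref{Theorem31} together with consistency/Slutsky. One small slip worth fixing: in (ii) the vector being projected is $j_k(\overline{Y}_{a,E})-j_k\left(\mu^{(p)}_E\right)$, not $j_k(\overline{Y}_{a,E})-j_k\left(\overline{Y}^{(p)}_E\right)$ (the latter would yield coordinates involving only $g_s^{a}(4)$, since ${\bf g}_s^{(p)}(4)$ is orthogonal to the columns of ${\bf D}_s$); your stated coordinates $\gamma_s^{(p)}(4)-g_s^{a}(4)$ correspond to the correct vector, so the conclusion stands once that intermediate description is corrected.
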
}
\begin{proof}
For part $(i)$  we note that for each $a=1,\dots, g$ we get a natural extension of a result in { Bhattacharya and Bhattacharya (2012) \cite{BhBh:2012}} as shown in \eqref{Result_Original}.
For part $(ii)$ recall that
\begin{equation}
T_{d} \left( Y^{(g)},\overline{Y}^{(p)}_E \right)={ n_a} \sum_{a=1}^{g} \left\| S_{\bar Y_a}(j_k, Y_a)^{-1/2} \tan_{j_{k} \left(\overline{Y}^{(p)}_E \right)} \left(j_{k}(\overline{Y}_{a,E}) - j_{k}\left(\mu^{(p)}_E \right) \right) \right\|^2 \notag
\end{equation} we start by rewriting the expression above and we have
\begin{eqnarray}
T_{d} \left( Y^{(g)},\overline{Y}^{(p)}_E \right)={n_a} \sum_{a=1}^{g} \left\| S_{\bar Y_a}(j_k, Y_a)^{-1/2} \tan_{j_{k} \left(\overline{Y}^{(p)}_E \right)} \left(j_{k}(\overline{Y}^{(p)}_E) - j_{k}\left(\mu^{(p)}_E \right) \right) \right. \notag\\
-\left. S_{\bar Y_a}(j_k, Y_a)^{-1/2} \tan_{j_{k} \left(\overline{Y}^{(p)}_E \right)} \left(j_{k}(\overline{Y}^{(p)}_E) - j_{k}\left(\overline{Y}_{a,E} \right) \right) \right\|^2 \notag\\
T_{d} \left( Y^{(g)},\overline{Y}^{(p)}_E \right)=\sum_{a=1}^{g} n_a~ \left \| S_{\bar Y_a}(j_k, Y_a)^{-1/2}~\left[(  \gamma_{1}^{(p)}(4))^T {\bf D}_1 \dots (\gamma_{q}^{(p)}(4))^T {\bf D}_q \right]^T \right. \notag\\
-\left. S_{\bar Y_a}(j_k, Y_a)^{-1/2} \left[( g_{1}^{a}(4) )^T {\bf D}_1 \dots (g_{q}^{a}(4) )^T {\bf D}_q \right]^T \right\|^2
\end{eqnarray}
	
\end{proof}
\noindent If $Y_{a,r_a}$ are $j_k$-nonfocal distributions on $(\mathbb RP^3)^q$ with an nonzero absolutely continuous component (see Ferguson(1996)\cite{Fe:1996}, p.30), one may obtain better coverage confidence regions, using nonparameric bootstrap. Consider the pivotal statistics $T_{c}\left( Y^{(g)},\mu^{(p)}_E \right)$ and $T_{d} \left( Y^{(g)},\overline{Y}^{(p)}_E \right).$ 
under the hypothesis \\
$~~~\quad \quad \quad ~~~\begin{cases}\label{Hypo1}
H_0 &:\ \mu_{1,E}=\mu_{2,E}=...=\mu_{g,E}= \mu_E^{(p)}, \notag\\
H_a &:\ni (i,j) 1\le i<j <g, \text{s.t.} \ \mu_{i,E} \ne \mu_{j,E} . \notag
\end{cases}$\\

\begin{corollary}
The $(1-c) 100 \%$ bootstrap confidence regions for $\mu_{E}$ with $d=g p$ are given by
\begin{enumerate}[(a)]
\item ${C^*}_{n, c}^{(g)}= j^{-1}(U^{*}_{n , c})$ and
$U^{*}_{n , c}= \{  j_k(\nu) \in j_k((\mathbb RP^3)^q): T_{c}\left( Y^{(g)},\nu \right)\leq {c^*}^{(g)}_{1-c} \}$
\noindent where ${c^*}^{(g)}_{1-c}$ is the upper $100(1-c) \%$ point of the values
\begin{equation}
 T_{c}\left( {Y^*}^{(g)},\overline{Y}^{(p)}_E\right)= \sum_{a=1}^{g} n_a ~\left( ({g^*}_{1}^{a}(4))^T {\bf D}_1 \dots ({g^*}_{s}^{a}(4))^T {\bf D}_q \right)~S_{\bar Y_{a}^{*}}(j_k,Y^{*}_a)^{-1}\left({g^*}_{1}^{a}(4)^T {\bf D}_1 \dots {g^*}_{q}^{a}(4)^T {\bf D}_q \right)^T
\end{equation} among the bootstrap re samples.
\item ${D^*}_{n, c}^{(g)}= j^{-1}({V^*}_{n , c})$ and
${V^*}_{n, c}= \{ j_k(\nu) \in j_k((\mathbb RP^3)^q): T_{c}\left( Y^{(g)},\overline{Y}^{(p)}_E , \nu \right) \leq  {d^*}^{(g)}_{1-c} \}$ where \\
$T_{d} \left( Y^{(g)},\overline{Y}^{(p)}_E , \nu \right)={n_a} \sum_{a=1}^{g} \left\| S_{\bar Y_a}(j_k, Y_a)^{-1/2} \tan_{j_{k} \left(\overline{Y}^{(p)}_E \right)} \left(j_{k}(\overline{Y}_{a,E}) - j_{k}(\nu) \right) \right\|^2$
where ${d^*}^{(g)}_{1-c} $ is the upper $100(1-c)\%$ point of the values of
\begin{equation}\label{eq:manova-boot}
{T_{d} \left( {Y^*}^{(g)},\overline{Y^*}^{(p)}_E ,\overline{Y}^{(p)}_E \right)=\sum_{a=1}^{g} n_a~ \left \| S_{\bar Y_{a}^{*}}(j_k, Y^{*}_a)^{-1/2}~\tan_{j_{k} \left(\overline{Y}^{*(p)}_E \right)} \left(j_{k}(\overline{Y}^{*}_{a,E}) - j_{k}(\overline{Y}^{(p)}_E ) \right)\right\|^2}
\end{equation}
\end{enumerate}

among the bootstrap resamples. The confidence regions given by \eqref{dstar_g} and \eqref{cstar_g} have both coverage error $O_{p}(n^{-2}).$
\end{corollary}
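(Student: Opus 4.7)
The plan is to reduce the corollary to two ingredients that are already in place in the paper: the pivotal nature of $T_c$ and $T_d$ established in Corollary \ref{ManoveTheo2}, and the second-order accuracy of the bootstrap for studentized smooth functions of sample means, used in the general manifold setting of Proposition \ref{t2}. First I would rewrite the definitions of ${C^*}_{n,c}^{(g)}$ and ${D^*}_{n,c}^{(g)}$ as a coverage statement: by construction $\mu_E^{(p)} \in {C^*}_{n,c}^{(g)}$ iff $T_c(Y^{(g)},\mu_E^{(p)}) \le {c^*}^{(g)}_{1-c}$, and similarly $\mu_E^{(p)} \in {D^*}_{n,c}^{(g)}$ iff $T_d(Y^{(g)},\overline{Y}^{(p)}_E,\mu_E^{(p)}) \le {d^*}^{(g)}_{1-c}$, so the coverage probabilities are just the probabilities that each pivot falls below the empirical quantile of its bootstrap analogue.

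Next I would invoke Corollary \ref{ManoveTheo2} to conclude that both $T_c$ and $T_d$ are asymptotic pivots, with a $\chi^2_{3q}$ limit that is free of $Q_1,\dots,Q_g$. In the bootstrap world the pair $(\overline{Y}^{(p)}_E,\hat Q_{n_a})$ plays the role of $(\mu_E^{(p)},Q_a)$: the eigenvectors $g_s^{*a}(r)$, the pooled projection $P_{j_k}$, and the studentizer $S_{\bar Y_a^*}(j_k,Y_a^*)$ are conditionally consistent for their population counterparts by the delta method applied to the VW embedding (which is polynomial, hence $C^\infty$) and to the projection $P_{j_k}$ (which is $C^\infty$ on the open $j_k$-nonfocal set, by simplicity of the top eigenvalues). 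Therefore the bootstrap distribution of $T_c(Y^{*(g)},\overline{Y}^{(p)}_E)$ consistently estimates that of $T_c(Y^{(g)},\mu_E^{(p)})$, and likewise for $T_d$; this already yields asymptotic coverage $1-c$.

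To upgrade the coverage to $O_p(n^{-2})$, I would appeal to the Edgeworth expansion machinery of Bhattacharya--Ghosh and Hall, transplanted to the extrinsic manifold setting as in Bhattacharya and Patrangenaru \cite{BhPa:2005} and in Proposition \ref{t2}. Because $T_c$ and $T_d$ are smooth functions of the sample second moments $n_a^{-1}\sum_i X_{a,i}^s (X_{a,i}^s)^T$, and because they are studentized with a consistent estimator of the asymptotic covariance, each admits a two-term Edgeworth expansion; the matching expansion for the bootstrap analogue differs only at the $o(n^{-1})$ level, which squared (the statistic is a quadratic form whose distribution function is two-sided in each coordinate) gives $O_p(n^{-2})$ coverage error. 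The main obstacle is verifying Cram\'er's condition: $j_k(Y_{a,1})$ is supported on the $3q$-dimensional image $j_k((\mathbb{R}P^3)^q)$ inside the ambient $(S(4,\mathbb R))^q$, so Cram\'er's condition fails in the ambient coordinates and must instead be checked for the $3q$ local tangent coordinates $\tan_{j_k(\mu_E^{(p)})}(j_k(\bar Y_{a,E}) - j_k(\mu_E^{(p)}))$; the assumed nonzero absolutely continuous component of $Q_a$ supplies exactly the Lebesgue density in those coordinates needed for the expansions, and the rest of the derivation is bookkeeping along the lines of Crane and Patrangenaru \cite{CrPa:2011} for the one-sample case, now summed independently across the $g$ groups.
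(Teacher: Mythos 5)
Your proposal is correct and takes essentially the same route as the paper, which states this corollary without a separate proof, presenting it as the specialization of Corollary \ref{c2} and Proposition \ref{t2} to the VW embedding with the explicit pivots $T_c$ and $T_d$ supplied by Corollary \ref{ManoveTheo2} --- precisely the reduction you carry out. Your additional detail (Edgeworth expansions for the studentized quadratic pivots, with Cram\'er's condition verified in the $3q$ tangent coordinates via the assumed nonzero absolutely continuous component) fills in what the paper delegates to the citations \cite{BhPa:2005}, \cite{BhBh:2012} and \cite{CrPa:2011}, and is the standard justification for the asserted $O_{p}(n^{-2})$ coverage error.
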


\noindent Note that here
 \begin{eqnarray}
 S_{\bar Y_{a}^{*}}(j_k,Y^{*}_a)_{(s,c)(t,b)}= n_{a}^{-1} ({\bf d}_{s}^{*(p)}(4) - {\bf d}_{s}^{*(p)}(c))^{-1} ({\bf d}_{t}^{*(p)}(4) - {\bf d}_{t}^{*(p)}(b))^{-1} \notag\\
 \times ~~~~~~~~~~~~~~~~~~~~~~~\notag\\
 ~~~~~\quad~~~~~~~~~\quad~~~~~~~~~~~~~~~\sum_{i} ({\bf g}^{*(p)}_{s}(c)\cdot X_{a,i}^{*s})({\bf g}^{*(p)}_{t}(b) \cdot X_{a,i}^{*t})({\bf g}^{*(p)}_{s}(4) \cdot X_{a,i}^{*s})({\bf g}^{*(p)}_{t}(4) \cdot X_{a,i}^{*t}) , b, c = 1,2,3.\notag
 \end{eqnarray}

\section{Application to face data analysis}\label{sc5}
 A digital images data set was collected using a high resolution Panasonic-Lumix DMC-FZ200 camera. Our analysis will be conducted on $g=5$ individuals. The images can be found at $ani.stat.fsu.edu/\sim vic/E-MANOVA$
\noindent We tested for the existence of a 3D mean projective shape difference to differentiate between five faces which are represented in Fig \ref{Manov1}

\begin{figure}[ht!]\label{Manov1}
\begin{center}
\includegraphics[scale = 0.4]{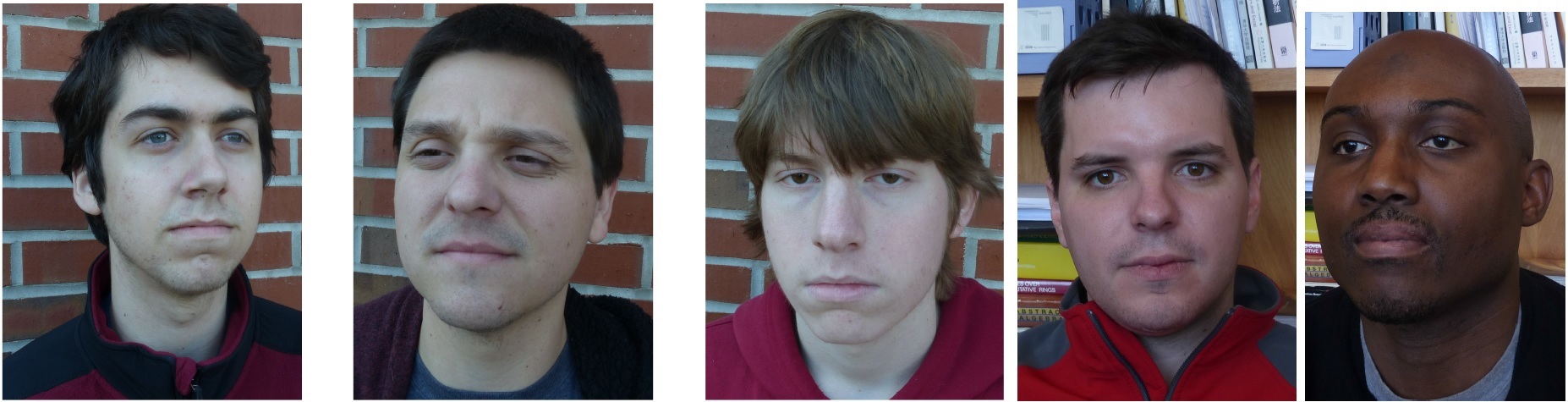}
\end{center}
\caption{\small Faces used in the extrinsic MANOVA analysis}
\end{figure}
The 3D surface reconstructions of these faces, with seven labeled landmarks, were obtained using the software Agisoft. These reconstructions (including texture) are displayed in Figure \ref{MAN_recons}.

\begin{figure}[ht!]\label{MAN_recons}
\begin{center}
\includegraphics[scale = 0.8]{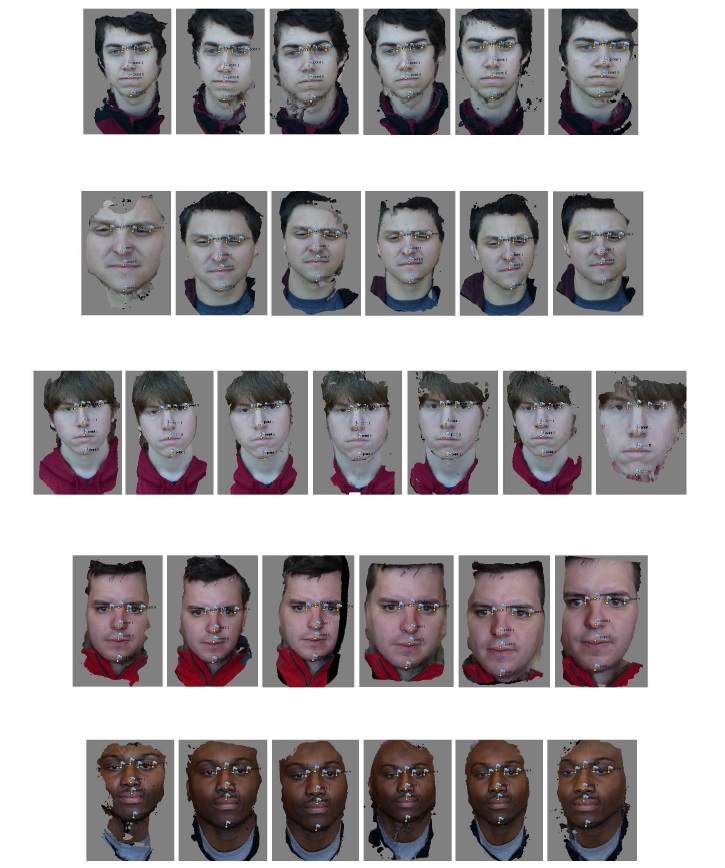}
\end{center}
\caption{\small Sample of Facial Reconstructions}
\end{figure}

\noindent The 3D reconstruction was done using the AGISOFT software. The images in Fig \ref{MAN_recons} represent 19 facial reconstructions. Each of those reconstruction was created using mostly 4 to 5  digital camera images of a given individual. We placed seven anatomical landmarks as shown across the data in Figure \ref{MAN_3}.
\begin{figure}[ht!]
\begin{center}\label{MAN_3}
\includegraphics[scale = 0.45]{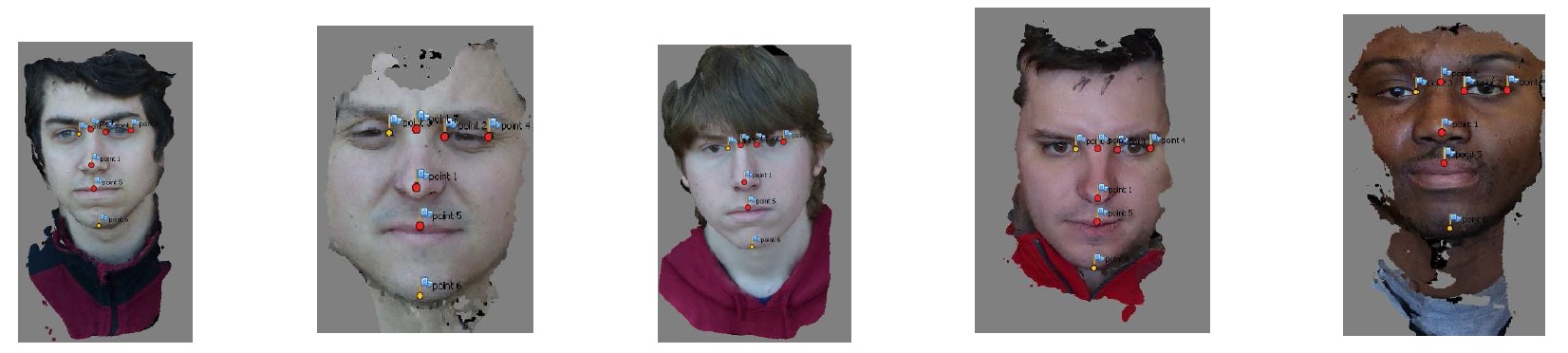}
\end{center}
\caption{\small Projective Frame Shown in Red}
\end{figure}

\noindent Five of those landmarks (colored in red) are selected as the projective frame and the resulting two projective coordinates determine the 3D projective shape of the seven landmark configuration selected. Note that we used a different projective frame than the one in Yao (2016)\cite{KDY:2016}, to insure that the landmarks are in general position. \\
We will compare these faces by conducting a MANOVA on manifold to compare $g=5$ VW-means on $P \Sigma_{3}^{7}=(\mathbb RP^3)^{2}.$ For $n=\sum_{a=1}^{5} n_a=31$ where $n_1=n_2=n4=n5=6$ and $n_3=7$ our hypothesis problem is \begin{align}
H_0 &:\ \mu_{1,E}=\mu_{2,E}=\mu_{3,E}= \mu_{4,E} =\mu_{5,E} = \mu_E,\notag\\
H_a &:\ at\ least\ one\ of \ equalities\ above\ does\ not\ hold. \notag
\end{align}
Since the true pulled mean is unknown and our data set is relatively small we will reject the null hypothesis if \\
$T_{d} \left( {Y}^{(3)},\overline{Y}^{(p)}_E  \right)=\sum_{a=1}^{5} n_a~ \left \| S_{\bar Y_{a}}(j_k, Y_a)^{-1/2}~\tan_{j_{k} \left(\overline{Y}^{(p)}_E \right)} \left(j_{k}(\overline{Y}_{a,E}) - j_{k}(\overline{Y}^{(p)}_E ) \right) \right\|^2$ is greater than ${d^*}^{(3)}_{1-\alpha},$
where ${d^*}^{(3)}_{1-\alpha}$ is the $(1-\alpha)100\%$ cutoff of the corresponding bootstrap distribution in equation \eqref{eq:manova-boot}. \\

\noindent Using $\alpha = 0.05,$ and $70543872$ resamples we obtain a value $T_{d} \left( {Y}^{(3)},\overline{Y}^{(p)}_E  \right)= 389860$ and  ${d^*}^{(3)}_{0.95}=60616,$ and we therefore reject the null hypothesis. We conclude that there exists a statistically significant VW-mean 3D-projective shape face difference between at least two of the individuals in our data set.

\end{document}